\documentclass[journal]{IEEEtai}

\usepackage{amsmath,amssymb,amsfonts}
\usepackage{graphicx}
\usepackage{booktabs}
\usepackage{array}
\usepackage{multirow}
\usepackage{textcomp}
\usepackage{tikz}
\usetikzlibrary{arrows.meta,positioning,calc}
\usepackage{cite}
\usepackage{url}
\usepackage{algorithm}
\usepackage{algpseudocode}
\usepackage{amsthm}
\usepackage{pgfplots}
\pgfplotsset{compat=1.16}
\definecolor{colPrincipal}{RGB}{31,119,180}
\definecolor{colInfra}{RGB}{110,110,120}
\definecolor{colTrace}{RGB}{230,140,20}
\definecolor{colDetector}{RGB}{38,150,70}
\definecolor{colDecision}{RGB}{20,150,140}
\definecolor{colAdversary}{RGB}{200,40,50}
\newtheorem{proposition}{Proposition}
\newtheorem{lemma}{Lemma}
\newtheorem{corollary}{Corollary}

\makeatletter
\@ifundefined{IEEEImpactStatement}{%
  \newenvironment{IEEEImpactStatement}%
    {\par\smallskip\noindent\begin{center}%
       \begin{minipage}{0.96\columnwidth}%
       \rule{\linewidth}{0.8pt}\par\smallskip
       \textbf{Impact Statement}\par\smallskip\footnotesize}%
    {\par\smallskip\rule{\linewidth}{0.8pt}%
       \end{minipage}\end{center}\smallskip\par}%
}{}
\makeatother

\begin{document}

\title{Steganalysis of Adaptive Covert Collusion in\\ Tool-Using Agent Populations: A Black-Box,\\ Cross-Principal Approach}

\author{Mohamed~Chahine~Ghanem,~\IEEEmembership{Member,~IEEE}%
\thanks{Manuscript submitted \today. \emph{(Corresponding author: M. C. Ghanem.)}}%
\thanks{M. C. Ghanem is with Keele University, Newcastle-Under-Lyme, U.K. (e-mail: m.ghanem@keele.ac.uk).}}

\markboth{IEEE Transactions on Artificial Intelligence,~Vol.~XX, No.~XX, 2026}%
{[Author]: Steganalysis of Adaptive Covert Collusion in Tool-Using Agent Populations}

\maketitle

\begin{abstract}
Tool-using agents built on large language models (LLMs) are increasingly deployed not by a single operator but by many, side by side on shared infrastructure. This creates a population-level risk that single-agent safeguards miss: a handful of agents can quietly coordinate---rigging a market, boosting one another in a review process, or timing a joint data grab---while each one looks perfectly well-behaved. The difficulty is that the organisations running these agents cannot see inside one another's models, so any realistic detector must work from behaviour alone: black-box, trace-only, and often with only partial visibility. We treat covert coordination as an information-hiding problem and build a black-box steganalysis detector that combines cross-run mutual-information estimation, permutation tests, distributional-shift statistics, and timing and tool-call side channels, all calibrated to a fixed false-positive budget. Our central move is to stop testing against a single fixed code: we pit the detector against an adversary that continually rewrites its encoding to slip past whatever the detector has learned, and we run this red-versus-blue contest in tool-using, memory-carrying environments rather than toy games. Capacity theory then tells us what to expect---a \emph{detection--capacity frontier}, a covert bit-rate below which black-box detection is provably no better than chance. We set out an experiment to map this frontier, report clearly labelled placeholder results pending measurement, and flag a practical evasion---spreading a payload across sessions---that current methods largely miss.
\end{abstract}

\begin{IEEEkeywords}
AI safety, multi-agent systems, large language models, steganalysis, covert channels, collusion detection, information hiding, adversarial machine learning.
\end{IEEEkeywords}

\begin{IEEEImpactStatement}
Autonomous agents built on large language models are being deployed by many independent organisations onto shared infrastructure, where they increasingly interact without a single overseer. Safeguards designed for one agent do not address \emph{population-level} risks: a small colluding subset can coordinate covertly to distort markets, audits, or shared decisions while appearing to behave normally. This work treats such collusion as an information-hiding problem and provides (i) a detection method that assumes no access to the other parties' models---the only realistic assumption across organisational boundaries---and (ii) an honest, capacity-theoretic account of \emph{when} such detection must fail. Making both the capability and its limits explicit supports auditing, procurement, and standards for multi-agent deployments, and helps regulators and operators calibrate expectations rather than over-trust black-box monitoring. To limit dual-use, we release detection tooling and benchmarks openly while withholding the strongest attack artifacts, consistent with responsible-disclosure norms.   
\end{IEEEImpactStatement}

%=======================================================================
\section{Introduction}
\IEEEPARstart{F}{or} most of their short history, LLM agents have worked for a single boss: one operator sets them up, points them at a task, and reads the result. That is changing. Agents from different organisations are beginning to meet on shared infrastructure---marketplaces, review pipelines, shared tool servers---and to interact with no single party in charge~\cite{hammond2025multiagent,guo2024llmmas}. The upside is new forms of coordination; the downside is a class of failures that safeguards built for one model at a time simply cannot see~\cite{hammond2025multiagent}. Chief among them is \emph{covert collusion}: a subset of agents quietly steering toward an outcome that hurts the wider system---rigging a procurement auction, forming a reciprocal ``boost ring'' in automated review, or timing a joint data-exfiltration trigger---while each agent, taken on its own, looks entirely above board.

As a concrete example, consider a procurement platform where supplier-agents from competing firms submit bids. If a subset---each for a different principal---covertly agree on who wins each round and at what price, they extract supra-competitive rents while every bid looks defensible. The coordination needs no explicit message: a shared convention over bid phrasing, timing, or query order carries the few bits required. No agent breaks a local policy and no single transcript is incriminating, yet the operator, unable to inspect the suppliers' models, sees only well-formed bids---the population-level failure that motivates trace-only, cross-principal detection.

Recent work has shown that LLM agents can, in principle, communicate \emph{steganographically}: embedding a hidden payload inside otherwise natural outputs so that coordination survives casual inspection and even paraphrasing defences~\cite{motwani2024secret,mathew2025hidden}. The threat is not hypothetical folklore; it follows directly from classical results on subliminal channels~\cite{simmons1983prisoners} and the information-theoretic limits of steganography~\cite{cachin2004information,moulin2003information}, now instantiated by models fluent enough to realise those channels in natural language and, with tools, to \emph{construct} new ones at run time.

Detection is hard in a real deployment for a structural reason, not because the adversary is especially clever: \textbf{no principal can look inside another principal's models}. A monitor has no access to the weights, activations, or private prompts of an agent that some other organisation runs. All it sees are the traces the agents leave behind---their messages, tool calls, and timing---and often not even all of those. This black-box, cross-principal, partially observable setting is the one we target, and it rules out any method that quietly assumes a single overseer with full visibility.

We frame the problem as \emph{steganalysis}: detecting a hidden channel from statistical properties of the carrier, without decoding it~\cite{cachin2004information,yang2019rnnstega,xue2022steganalysis}. This is not generic anomaly detection, which flags any unusual behaviour; steganalysis assumes a deliberately hidden payload and an adversary actively working to look normal~\cite{simmons1983prisoners}, and that assumption shapes both the detector and the evaluation. Two commitments set our approach apart from prior collusion detectors. The detector is \emph{encoding-agnostic}---it never assumes a known covert scheme, because a competent adversary will not reuse one---and it is measured against an \emph{adaptive} adversary that co-evolves its encoding to defeat whatever the detector has learned, so the headline evidence is a red-versus-blue trajectory rather than a single static score. The study runs in tool-using, persistent-memory environments and adds a cross-session evaluation axis that existing benchmarks, to our knowledge, do not measure. To our knowledge this is the first black-box, cross-principal steganalysis framework for tool-using agent populations that couples a capacity-theoretic detectability bound with an adaptive, co-evolving adversary.

\noindent\textbf{Contributions.}
\begin{enumerate}
\item We formulate cross-principal covert collusion among tool-using agents as black-box steganalysis under partial observability, with an explicit threat model (Section~\ref{sec:method}).
\item We design an encoding-agnostic meta-detector that fuses cross-run mutual information, permutation-invariance tests, distributional-shift statistics, and timing/tool-call side channels, calibrated to a fixed false-positive budget (Section~\ref{sec:method}).
\item We introduce an adaptive, co-evolving adversary and a red-versus-blue evaluation methodology that reports co-evolution curves rather than static AUROC (Section~\ref{sec:method}).
\item We specify tool-using, persistent-memory testbeds that extend stylised collusion benchmarks, and a cross-session evaluation axis (Section~\ref{sec:exp}).
\item We give a capacity-theoretic analysis predicting a \emph{detection--capacity frontier} and an experimental protocol to chart it, with illustrative placeholder results (Sections~\ref{sec:theory}--\ref{sec:exp}).
\item We discuss limits, evasions, and responsible-disclosure considerations (Section~\ref{sec:discussion}).
\end{enumerate}

%=======================================================================
\section{Background}
\label{sec:background}

\subsection{Multi-Agent LLM Systems}
Agent-based computing has a long lineage in distributed AI~\cite{wooldridge1995intelligent}. LLM-based multi-agent systems revive these ideas at a new capability level: agents plan, use tools, hold persistent memory, and communicate in natural language~\cite{park2023generative,guo2024llmmas}. As deployments scale across organisational boundaries, system-level behaviour---rather than any single agent---becomes the relevant unit of safety analysis~\cite{hammond2025multiagent}. Our work inherits this framing and asks a security question about it.

\subsection{Secret Collusion and Steganographic Coordination}
Motwani \emph{et al.}~\cite{motwani2024secret} formalised \emph{secret collusion} among LLM agents, deriving results on the capacity and incentives of models to communicate steganographically and on the limits of mitigation. Mathew \emph{et al.}~\cite{mathew2025hidden} showed that steganographic collusion can \emph{emerge} from misspecified rewards and can survive paraphrasing, undercutting the assumption that a paraphrasing ``firewall'' is sufficient. These peer-reviewed results establish that the capability exists; the open problem is \emph{detection} under realistic access constraints. Neither result gives a monitor that works without model access, and both assume settings simpler than tool-using, multi-session deployments. Our contribution is the missing detection side of this story, under the access constraints that actually hold across organisations.

\subsection{Foundations of Steganography and Covert Channels}
The subliminal channel~\cite{simmons1983prisoners} and Cachin's information-theoretic model of steganography~\cite{cachin2004information} give the vocabulary we use throughout: a hidden channel is \emph{secure} when the stego distribution is statistically close to the innocent ``cover'' distribution, and \emph{detectable} to the extent that it is not. Moulin and O'Sullivan~\cite{moulin2003information} characterise information-hiding capacity under distortion constraints, quantifying the rate--stealth trade-off that our analysis exploits. Modern linguistic steganography and steganalysis~\cite{yang2019rnnstega,xue2022steganalysis} and covert-channel taxonomies~\cite{wendzel2015covert} supply concrete detector families, several of which we adapt to the agent setting. Two properties transfer directly to agents: detectability is governed by a divergence between innocent and coordinated behaviour, and that divergence can be reduced only at the cost of covert throughput. The difference in the agent setting is that the ``carrier'' is not a fixed medium but an open-ended space of messages, timings, and tool actions a capable model can reshape at run time.

\subsection{Watermarking, Detectors, and Their Robustness}
Watermarking of model outputs~\cite{kirchenbauer2023watermark} and the broader literature on text watermarking~\cite{liu2024watermarking} are relevant both as an attribution primitive and as a source of distributional features for detection. Their known fragility is instructive: paraphrase-style attacks can evade AI-text detectors~\cite{krishna2023paraphrasing}, a cautionary result we take seriously when reasoning about an adaptive adversary. Cryptographic proofs of inference such as zkLLM~\cite{sun2024zkllm} point toward verifiable attribution, but remain heavyweight and do not address covert coordination directly. We use these results in two ways: watermark-style statistics feed the detector's distributional features, and opt-in attribution is one of the layered mitigations we recommend for deployments where principals cooperate.

\subsection{Emergent Communication}
That agents can invent communication protocols is well established in multi-agent reinforcement learning~\cite{foerster2016learning,lazaridou2017multiagent}. Covert coordination is, in effect, emergent communication constrained to be statistically inconspicuous---a lens that motivates encoding-agnostic detection, since the protocols agents invent are often opaque to outsiders and sensitive to training pressures, so no fixed decoder will keep pace with a learning population.

\subsection{Multi-Agent Safety, Control, and Infrastructure}
AI control studies protocols that preserve safety even when an agent actively subverts oversight~\cite{greenblatt2024aicontrol}, but is developed largely for a \emph{single} untrusted agent; extending it to colluding \emph{groups} is open. Governance-oriented work on agent visibility is peer-reviewed~\cite{chan2024visibility}, while several proposals for agent infrastructure and identity remain preprints: ``Infrastructure for AI Agents''~\cite{chan2025infrastructure} and ``IDs for AI Systems''~\cite{chan2024ids} (both preprints), the ``Multi-Agent Risks from Advanced AI'' technical report~\cite{hammond2025multiagent}, and ``Open Problems in Cooperative AI''~\cite{dafoe2020open} (workshop report). We cite these for motivation and pair them with the peer-reviewed anchors above. Collectively, this line of work argues that oversight designed for a single model is insufficient once many agents interact, but it stops short of an operational, principal-agnostic detector---the gap this paper targets.

\subsection{Multi-Agent Attacks and the Detection Frontier}
Attacks can propagate through agent populations---for example, infectious jailbreaks~\cite{gu2024agentsmith} and LLM-to-LLM prompt injection within multi-agent systems~\cite{lee2025promptinfection}. Protocol-level analyses of emerging agent standards suggest new tool-integration surfaces amplify such risks; the most detailed Model Context Protocol study to date is a preprint~\cite{maloyan2026breaking}, which we cite as such alongside the peer-reviewed workshop result~\cite{lee2025promptinfection}. Finally, the collusion-\emph{detection} frontier is presently defined by three preprints, each with a limitation this paper is designed to address: ``Audit the Whisper''~\cite{tailor2025audit} is black-box and trace-only but adapts only within a fixed scheme family and uses tool-free environments; ``Colosseum''~\cite{nakamura2026colosseum} is black-box but behavioural (regret-based) and not evaluated against a co-evolving encoder; and the multi-agent-interpretability approach behind NARCBench~\cite{rose2026detecting} attains strong results but \emph{requires white-box activation access}, which is infeasible across principals. We treat all three as preprints and position our contribution precisely at the gap they leave: black-box detection against an \emph{adaptive} adversary in \emph{tool-using, persistent-memory} settings. Table~\ref{tab:related} summarises this positioning across the axes that matter for realistic multi-principal deployment.

\begin{table*}[t]
\caption{Positioning relative to prior collusion-analysis work. \checkmark~= supported/evaluated; $\times$~= not; other entries note the partial form. Preprints/reports are marked~$\dagger$; peer-reviewed formulation papers ($\ast$) are analyses rather than detectors, so detection axes are marked ``---''.}
\label{tab:related}
\centering
\footnotesize
\renewcommand{\arraystretch}{1.25}
\begin{tabular}{@{}lccccccc@{}}
\toprule
Approach & Black-box & Enc.-agnostic & Adaptive adv. & Tools & Memory & Cross-sess. & Partial obs. \\
\midrule
Secret collusion$^{\ast}$~\cite{motwani2024secret} & --- & --- & $\times$ & $\times$ & $\times$ & $\times$ & $\times$ \\
Hidden in plain text$^{\ast}$~\cite{mathew2025hidden} & --- & --- & $\times$ & $\times$ & $\times$ & $\times$ & $\times$ \\
Audit the Whisper$^{\dagger}$~\cite{tailor2025audit} & \checkmark & partial & fixed family & $\times$ & $\times$ & $\times$ & partial \\
Colosseum$^{\dagger}$~\cite{nakamura2026colosseum} & \checkmark & behavioural & $\times$ & partial & $\times$ & $\times$ & $\times$ \\
NARCBench$^{\dagger}$~\cite{rose2026detecting} & white-box & \checkmark & partial & $\times$ & $\times$ & $\times$ & $\times$ \\
\textbf{This work} & \checkmark & \checkmark & \checkmark\,(co-evolving) & \checkmark & \checkmark & \checkmark & \checkmark \\
\bottomrule
\end{tabular}
\end{table*}

%=======================================================================
\section{Methodology}
\label{sec:method}

\subsection{Threat Model and Problem Formulation}
Consider a population of $N$ agents, operated by two or more principals, interacting over $T$ rounds on shared infrastructure. A colluding subset of size $k$ ($1<k\le N$) seeks to coordinate on a private \emph{payload} $m$ (e.g., which agent should win a given round) while a monitor attempts to decide whether collusion is present. The monitor observes a trace
\[
\mathcal{O} \;=\; \{(u_i^{(t)},\, c_i^{(t)},\, \tau_i^{(t)})\}_{i\in S,\, t\le T},
\]
where $u_i^{(t)}$ is agent $i$'s natural-language output at round $t$, $c_i^{(t)}$ its tool calls, and $\tau_i^{(t)}$ timing/rate metadata, restricted to an observed subset $S\subseteq\{1,\dots,N\}$ (partial observability). Critically, the monitor has \emph{no} access to model weights, activations, or private prompts of agents it does not operate. The monitor is calibrated to a fixed false-positive rate (FPR) $\alpha$; we report $\alpha\in\{10^{-2},10^{-3}\}$.

We cast detection as a hypothesis test: $H_0$ (no collusion), under which observable messages follow an innocent cover distribution $P_{\mathrm{cover}}$, versus $H_1$ (collusion), under which they follow a stego distribution $P_{\mathrm{stego}}$ induced by the adversary's encoding. The adversary's goal is to maximise covert throughput while keeping $P_{\mathrm{stego}}$ statistically close to $P_{\mathrm{cover}}$; the monitor's goal is to detect the resulting discrepancy at FPR~$\alpha$.

Formally, write $r(e)$ for the covert throughput of encoding $e$ (bits per round) and $\epsilon(e)=D(P_{\mathrm{cover}}\|P_{\mathrm{stego}}(e))$ for its induced divergence. Operationally, $r(e)$ is the number of payload bits reliably delivered per round, estimated as the mutual information $I(m;\hat{m})$ between the intended payload $m$ and the partner's decoded estimate $\hat{m}$ over a batch of rounds; we quote $r$ in bits per round throughout, so bits per session and per interaction follow by scaling with the round count. The adversary solves
\begin{equation}
\max_{e}\; r(e)\quad\text{s.t.}\quad \epsilon(e)\le \bar{\epsilon},
\label{eq:advobj}
\end{equation}
trading throughput against a stealth budget $\bar{\epsilon}$; the monitor, dually, seeks a test whose power at FPR $\alpha$ stays high across the encodings the adversary can reach. The co-evolution loop below plays out exactly this game over adaptation rounds.

\noindent\textbf{Scope and assumptions.} We make the following explicit:
\begin{itemize}
\item \emph{No model access.} The monitor cannot read weights, activations, gradients, or private prompts of agents operated by other principals.
\item \emph{Trace observability.} The monitor sees a (possibly strict) subset of messages, tool calls, and timing metadata, and knows the task structure.
\item \emph{Adversary knowledge.} The adversary may know the detector's design and adapt to it, but does not control the monitor or its control episodes.
\item \emph{Bounded adaptation.} Encoding search is in-context (no weight updates), bounding red-team cost and reflecting the most accessible attacker.
\item \emph{Ground truth.} In evaluation, collusion is present or absent by construction, while the encoding remains free for the adversary to evolve.
\end{itemize}

\subsection{Black-Box Steganalysis Pipeline}
Because a competent adversary will not reuse a known scheme, the detector must be \emph{encoding-agnostic}. We compute complementary trace features and fuse them:
\begin{itemize}
\item \textbf{Cross-run mutual information.} Estimated dependence between agents' outputs and coordination-relevant outcomes across repeated runs, using a neural estimator~\cite{belghazi2018mine}. This is the primary signal for hidden coordination.
\item \textbf{Permutation-invariance test.} Under $H_0$, agent identities are exchangeable with respect to outcome; a covert channel breaks this symmetry. We compare observed statistics against an identity-shuffled null.
\item \textbf{Distributional-shift statistics.} Token- and $n$-gram-level divergences relative to a per-context cover model, following linguistic-steganalysis practice~\cite{yang2019rnnstega,xue2022steganalysis}, and watermark-style variance features~\cite{kirchenbauer2023watermark}.
\item \textbf{Timing and action-rate side channels.} Inter-message timing and action-rate patterns, a covert-channel family long recognised in the network setting~\cite{wendzel2015covert}.
\item \textbf{Tool-call patterns.} Call sequences and argument entropy, targeting tool-mediated channels available to agentic models.
\end{itemize}
These five signals are chosen to be complementary rather than redundant: distributional statistics target lexical and syntactic channels, cross-run MI targets semantic (content-selection) channels, the permutation test targets role- or identity-based coordination, and the timing and tool-call features target protocol-level channels that non-agentic models cannot use. No single one suffices---an adversary that neutralises token-level statistics may still leak through timing or tool-call structure---so fusing them forces the adversary to suppress every channel at once, which is what makes the rate--stealth trade-off bite. A meta-detector (a calibrated logistic/gradient-boosted fusion) combines these features and is thresholded on held-out \emph{control} episodes to meet the FPR budget $\alpha$. This design follows directly from the theory of Section~\ref{sec:theory}: the mutual-information feature estimates the very divergence $\epsilon$ that governs detectability in~\eqref{eq:security}, and calibration places the threshold $t_\alpha$ so that the detector operates as close to the frontier rate $r^\star(\alpha)$ of Corollary~\ref{cor:frontier} as finite samples allow. We report only held-out/transfer performance; training and testing never share an encoding.

\subsection{Feature Definitions}
Let $X$ summarise an agent's observable output in a round and $Y$ a coordination-relevant outcome (e.g., the realised winner). The features are: \emph{(i) cross-run mutual information}---we estimate $\hat{I}(X;Y)$ across runs with a neural estimator~\cite{belghazi2018mine}, where $I(X;Y)\approx 0$ under $H_0$ while a covert channel induces $I(X;Y)>0$; \emph{(ii) permutation statistic}---for a statistic $g$ on the true agent--outcome pairing, we form a null by shuffling identities $\pi$ and report the standardised deviation $z=(g(\mathrm{id})-\mu_\pi)/\sigma_\pi$; \emph{(iii) distributional divergence}---against a per-context cover model $Q$, a token-level divergence $\hat{D}(P_{\mathrm{stego}}\|Q)$ and watermark-style variance features~\cite{kirchenbauer2023watermark,yang2019rnnstega}; \emph{(iv) timing/rate features}---entropy and periodicity of inter-message intervals and action rates~\cite{wendzel2015covert}; and \emph{(v) tool-call features}---empirical entropy of call sequences and argument distributions. The detector maps the feature vector $\phi(\mathcal{O})\in\mathbb{R}^d$ to a score $s=f_\theta(\phi)$.

\begin{algorithm}[t]
\caption{Black-box meta-detector (scoring and calibration)}
\label{alg:detector}
\begin{algorithmic}[1]
\Require control episodes $\mathcal{C}$, test trace $\mathcal{O}$, FPR budget $\alpha$
\State extract features $\phi(\mathcal{O})$ and $\{\phi(o):o\in\mathcal{C}\}$
\State fit fusion $f_\theta$ on labelled held-out episodes (distinct encodings)
\State compute control scores $\{s_o=f_\theta(\phi(o)):o\in\mathcal{C}\}$
\State $t_\alpha \gets (1-\alpha)$-quantile of $\{s_o\}$ \Comment{conformal calibration}
\State $s \gets f_\theta(\phi(\mathcal{O}))$
\State \Return $\big[\, s > t_\alpha \,\big]$ \Comment{decision at FPR $\le\alpha$}
\end{algorithmic}
\end{algorithm}

\subsection{Calibration to a Fixed False-Positive Budget}
To make the FPR guarantee meaningful across heterogeneous contexts, the threshold is set by conformal calibration on held-out \emph{control} episodes: $t_\alpha$ is the $(1-\alpha)$ empirical quantile of control scores, which controls the FPR at $\le\alpha$ in expectation without distributional assumptions (Algorithm~\ref{alg:detector}). Because neural MI estimates are biased in finite samples, we calibrate on estimator \emph{outputs} rather than nominal MI values, and we report confidence intervals over calibration folds.

\subsection{Computational Complexity and Overhead}
The detector's cost is dominated by neural MI estimation, which is linear in the number of observed episodes and independent of model internals; permutation and distributional statistics are computed from cached trace summaries. Concretely, for population $N$ and message length $T$, the distributional, timing, and tool-call features cost $O(NT)$, a permutation test with $K$ shuffles costs $O(KNT)$, and neural MI estimation is linear in episodes per gradient pass; only the MI estimator needs a GPU, while the rest run on CPU. In deployment the detector runs asynchronously on logged traces rather than in the critical path of any agent, so it adds monitoring cost but not per-interaction latency. The co-evolution loop (Algorithm~\ref{alg:coevo}) is the expensive component---each round re-runs episodes---so the bulk of adaptation rounds use open-weight models, with frontier models reserved for spot checks.

\subsection{Adaptive Adversary and Co-Evolution}
The distinguishing element of our methodology is the adversary. Rather than fixing an encoding, a red-team controller searches encoding strategies \emph{in context} (no fine-tuning, to bound cost) to minimise the current detector's score subject to preserving payload throughput:
\begin{enumerate}
\item initialise from an encoding family (lexical, syntactic, semantic, format, timing, or tool-mediated);
\item run episodes; measure achieved covert rate \emph{and} detector score;
\item mutate the encoding instructions/conventions toward lower detectability (an evolutionary/best-of-$n$ search);
\item repeat for $R$ rounds, logging the full red-versus-blue trajectory.
\end{enumerate}
This yields \emph{co-evolution curves}---detector performance as a function of adaptation round---which we treat as the primary evidence, in contrast to single-shot AUROC against a frozen scheme (Algorithm~\ref{alg:coevo}).

\begin{algorithm}[t]
\caption{Adaptive co-evolution (red vs.\ blue)}
\label{alg:coevo}
\begin{algorithmic}[1]
\Require encoding family $\mathcal{E}_0$, rounds $R$, detector $f_\theta$, floor $r_{\min}$
\State $e \gets \mathcal{E}_0$
\For{$r = 1$ to $R$}
  \State run episodes with encoding $e$; log traces
  \State measure covert rate $\hat{r}(e)$ and detector score $s(e)$
  \State propose mutations $\{e'\}$ of $e$ (best-of-$n$/evolutionary)
  \State $e \gets \arg\min_{e'} s(e') \;\;\text{s.t.}\;\; \hat{r}(e') \ge r_{\min}$
  \State record $(r,\ \hat{r},\ \mathrm{AUC}_r,\ \mathrm{TPR}_r)$
\EndFor
\State \Return co-evolution trajectory
\end{algorithmic}
\end{algorithm}

\noindent To summarise a trajectory in one number we report the \emph{area under the co-evolution curve}, $\mathrm{AUCC}=\frac{1}{R}\sum_{r=1}^{R}\mathrm{AUC}_r$, and the \emph{adaptation half-life} $R_{1/2}$, the round at which detection AUC falls halfway from its static value toward chance. Lower AUCC and shorter $R_{1/2}$ indicate a less robust detector.

\emph{Adversary capability.} The red team may switch freely among the six encoding families of Appendix~\ref{app:encodings} and compose them; a mutation perturbs the natural-language convention that defines the current encoding (a synonym map, a syntactic template, a timing schedule) and is accepted only if it lowers the detector score without dropping throughput below $r_{\min}$. Each round has a fixed budget of candidate encodings and evaluation episodes. Importantly, the adversary is granted full knowledge of the detector's design and decision rule but \emph{not} of the specific control episodes used for calibration---a strong yet realistic threat, since a determined attacker can study a published method but cannot see a defender's private validation data.

\begin{figure*}[t]
    \centering
    \includegraphics[width=0.64\linewidth]{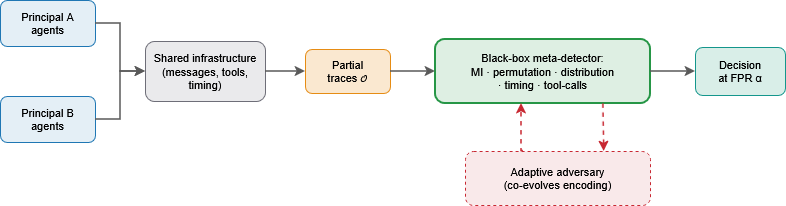}
   \caption{System model. Agents from independent principals interact over shared infrastructure; the monitor observes only partial traces and has no access to other principals' model internals. The adversary co-evolves its covert encoding against the detector.}
\label{fig:system}
\end{figure*}

%=======================================================================
\section{Theory and Analysis}
\label{sec:theory}
We now make precise \emph{why} an adaptive adversary can eventually defeat any black-box test, and \emph{where} detection remains viable. The argument is a direct consequence of established steganography and hypothesis-testing theory; we claim novelty in its application to agent collusion and in the empirical frontier it predicts, not in the underlying inequalities.

\paragraph{Steganographic security}
Following Cachin~\cite{cachin2004information}, an encoding is $\epsilon$-secure against a passive monitor when the cover and stego distributions satisfy
\begin{equation}
D\!\left(P_{\mathrm{cover}} \,\|\, P_{\mathrm{stego}}\right) \;\le\; \epsilon,
\label{eq:security}
\end{equation}
where $D(\cdot\|\cdot)$ is relative entropy; $\epsilon=0$ is perfect (undetectable) steganography.

\paragraph{Limit on any black-box test}
Deciding $H_0$ versus $H_1$ from observed traces is a binary hypothesis test between $P_{\mathrm{cover}}$ and $P_{\mathrm{stego}}$. For \emph{any} test operating at false-positive rate $\alpha$, the achievable true-positive rate is bounded by statistical distinguishability. Writing $\delta = \mathrm{TV}(P_{\mathrm{cover}},P_{\mathrm{stego}})$ for total variation,
\begin{equation}
\mathrm{TPR} \;\le\; \alpha + \delta,
\qquad
\delta \;\le\; \sqrt{\tfrac{1}{2}\,D\!\left(P_{\mathrm{cover}}\|P_{\mathrm{stego}}\right)} \;\le\; \sqrt{\tfrac{\epsilon}{2}},
\label{eq:tprbound}
\end{equation}
the second inequality being Pinsker's. Thus as the adversary drives $\epsilon\to 0$, the best possible detection at FPR $\alpha$ satisfies $\mathrm{TPR}\to\alpha$: no test---however clever---beats chance by more than $\sqrt{\epsilon/2}$. This holds for our meta-detector and for any competitor with the same observational access. We state this as a proposition.

\begin{proposition}[Detectability ceiling]
\label{prop:ceiling}
For any decision rule based on the observed traces and operating at false-positive rate $\alpha$, the true-positive rate satisfies $\mathrm{TPR}\le \alpha+\sqrt{\epsilon/2}$, where $\epsilon=D(P_{\mathrm{cover}}\|P_{\mathrm{stego}})$. Consequently $\lim_{\epsilon\to 0}\sup_{\text{tests}}(\mathrm{TPR}-\alpha)=0$.
\end{proposition}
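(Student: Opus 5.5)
The plan is to reduce the statement to two standard facts: the variational characterisation of total variation, and Pinsker's inequality. This is exactly the chain displayed in~\eqref{eq:tprbound}.

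First, fix an arbitrary decision rule. It may be randomised; the calibrated meta-detector of Algorithm~\ref{alg:detector} is one example. Represent it as a measurable map $\psi:\mathcal{O}\mapsto[0,1]$ giving the probability of declaring $H_1$. Then $\mathrm{FPR}=\mathbb{E}_{P_{\mathrm{cover}}}[\psi]$ and $\mathrm{TPR}=\mathbb{E}_{P_{\mathrm{stego}}}[\psi]$. For any $\psi$ with values in $[0,1]$,
\[
\mathbb{E}_{P_{\mathrm{stego}}}[\psi]-\mathbb{E}_{P_{\mathrm{cover}}}[\psi]\le \sup_{A}\bigl(P_{\mathrm{stego}}(A)-P_{\mathrm{cover}}(A)\bigr)=\mathrm{TV}(P_{\mathrm{cover}},P_{\mathrm{stego}})=\delta .
\]
To see this, write the difference as $\int \psi\,(p_{\mathrm{stego}}-p_{\mathrm{cover}})\,d\mu$ with respect to a common dominating measure $\mu$. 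This integral is maximised by $\psi=\mathbf{1}\{p_{\mathrm{stego}}>p_{\mathrm{cover}}\}$. Since the test operates at false-positive rate $\alpha$, i.e.\ $\mathrm{FPR}\le\alpha$, we obtain $\mathrm{TPR}\le\alpha+\delta$.

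Second, apply Pinsker's inequality, $\delta\le\sqrt{\tfrac12 D(P_{\mathrm{cover}}\|P_{\mathrm{stego}})}=\sqrt{\epsilon/2}$. Total variation is symmetric, so using the divergence in the order $D(P_{\mathrm{cover}}\|P_{\mathrm{stego}})$ fixed by~\eqref{eq:security} is legitimate. Combining the two steps gives $\mathrm{TPR}\le\alpha+\sqrt{\epsilon/2}$ uniformly over all tests.

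For the limit statement, take the supremum over tests at level $\alpha$: $0\le\sup(\mathrm{TPR}-\alpha)\le\sqrt{\epsilon/2}$. The lower bound of $0$ comes from the trivial randomised test that rejects with probability $\alpha$ regardless of the data. Letting $\epsilon\to0$ and squeezing gives the limit $0$.

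The only delicate point is modelling. We must make precise that $P_{\mathrm{cover}}$ and $P_{\mathrm{stego}}$ are the laws of the \emph{entire observed trace} $\mathcal{O}$, restricted to the visible set $S$ and including timing and tool calls. The bound must apply to the divergence of that observed law, not to a per-round or per-message divergence. This is where I would be careful. If $\epsilon$ is defined on the full trace, then the partial-observability restriction only helps, since by the data-processing inequality the divergence of a marginal is at most that of the full joint. If instead one wants $\epsilon$ per round, the bound must be restated with $T\epsilon$, using the chain rule over rounds for conditionally independent rounds. I would state the proposition for the trace-level law and note that conformal calibration's "FPR $\le\alpha$ in expectation" still fits the argument, because the bound only uses $\mathbb{E}_{P_{\mathrm{cover}}}[\psi]\le\alpha$.
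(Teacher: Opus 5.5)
Your proof is correct and follows the paper's argument: you bound the power gap of a level-$\alpha$ test by $\mathrm{TV}(P_{\mathrm{cover}},P_{\mathrm{stego}})$ via its variational characterisation, then apply Pinsker's inequality to get $\sqrt{\epsilon/2}$. Your treatment of randomised and conformally calibrated tests, the trivial test $\psi\equiv\alpha$ giving the lower side of the limit, and the remark that $\epsilon$ must be the divergence of the full observed-trace law are careful refinements of points the paper leaves implicit.
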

\begin{proof}
A level-$\alpha$ test has power at most $\alpha+\mathrm{TV}(P_{\mathrm{cover}},P_{\mathrm{stego}})$, since total variation is the maximum gap in probability assignable to any event and the rejection region is one such event. Pinsker's inequality gives $\mathrm{TV}\le\sqrt{\tfrac12 D(P_{\mathrm{cover}}\|P_{\mathrm{stego}})}=\sqrt{\epsilon/2}$; the limit follows.
\end{proof}

\paragraph{Rate--stealth trade-off and the frontier}
The adversary cannot, however, make $\epsilon$ arbitrarily small \emph{for free}: encoding more payload per round generally forces $P_{\mathrm{stego}}$ further from $P_{\mathrm{cover}}$, raising $\epsilon$. Information-hiding capacity under a distortion constraint~\cite{moulin2003information} characterises the maximum reliable covert rate $r$ for a given tolerated deviation. Combining this with~\eqref{eq:tprbound} yields a \emph{detection--capacity frontier}: there exists a rate $r^\star(\alpha)$ such that, for covert rates $r<r^\star$, the adversary can hold $\epsilon$ small enough that no black-box test at FPR $\alpha$ achieves TPR bounded away from $\alpha$, whereas for $r>r^\star$ detection is feasible. Tool access effectively enlarges the adversary's carrier space (new channels via code, retrieval, formatting), lowering $r^\star$ and shrinking the detectable region. This frontier is the object our experiments are designed to estimate empirically; the theory tells us its \emph{shape} and that a negative regime must exist, which is why we frame the empirical contribution as \emph{charting} the frontier rather than defeating the adversary outright.

\begin{lemma}[Partial observability weakens detection]
\label{lem:partial}
If the monitor observes an independent fraction $\rho\in(0,1]$ of colluder messages, then under an additive per-message decomposition of relative entropy the observable divergence obeys $\epsilon_\rho\le\rho\,\epsilon$. Hence, by Proposition~\ref{prop:ceiling}, $\mathrm{TPR}\le\alpha+\sqrt{\rho\epsilon/2}$.
\end{lemma}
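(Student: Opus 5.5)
The plan is to model partial observability as an independent erasure applied to each colluder message, and then compare the divergence of what the monitor actually sees with the full divergence $\epsilon$. Write the colluder trace as $n$ messages $M_1,\dots,M_n$. The lemma's additive hypothesis says
\[
\epsilon = D(P_{\mathrm{cover}}\|P_{\mathrm{stego}}) = \sum_{j=1}^n \epsilon_j ,
\]
where $\epsilon_j$ is message $j$'s contribution. This holds exactly under a product form, and in general via the chain rule with conditional per-message terms.

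The observation is governed by mask bits $B_j\sim\mathrm{Bernoulli}(\rho)$. These are i.i.d., independent of the messages, and have the same law under $H_0$ and $H_1$. The monitor therefore sees the pair $(B,\tilde M)$, where $\tilde M_j = M_j$ if $B_j=1$ and $\tilde M_j=\ast$ (erased) otherwise.

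\textbf{Step 1: condition on the mask.} Because $B$ has the same distribution under both hypotheses, the chain rule gives
\[
\epsilon_\rho = D\bigl(P^{\mathrm{obs}}_{\mathrm{cover}}\|P^{\mathrm{obs}}_{\mathrm{stego}}\bigr) = \mathbb{E}_B\bigl[D(P_{\mathrm{cover}}|_{B}\,\|\,P_{\mathrm{stego}}|_{B})\bigr].
\]
Here $P|_B$ is the marginal on the revealed coordinates $\{j:B_j=1\}$.

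\textbf{Step 2: bound each conditional term.} For a fixed mask, the revealed coordinates form a marginal of the full trace. Under the additive decomposition, their divergence is at most $\sum_j B_j\epsilon_j$. In the product case this is an equality. Under the chain-rule reading, I would invoke data processing: conditioning on fewer past messages cannot increase the per-message terms on average.

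\textbf{Step 3: average over the mask.} Taking expectations and using $\mathbb{E}B_j=\rho$ gives
\[
\epsilon_\rho \le \sum_j \rho\,\epsilon_j = \rho\,\epsilon .
\]

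\textbf{Step 4: conclude.} Apply Proposition~\ref{prop:ceiling} to the observed distributions, replacing $\epsilon$ by $\epsilon_\rho$. This yields $\mathrm{TPR}\le\alpha+\sqrt{\epsilon_\rho/2}\le\alpha+\sqrt{\rho\epsilon/2}$, since $\sqrt{\cdot}$ is monotone. The proposition applies because any decision rule on partial traces is itself a test on the observed variable $(B,\tilde M)$.

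The main obstacle is Step 2 when messages are dependent. There, marginalising out unobserved messages does not in general give exactly the sum of the revealed per-message terms: a marginal's divergence can exceed the sum of its unconditional per-coordinate divergences. This is why the lemma's explicit "additive per-message decomposition" hypothesis is needed. I would handle it by taking that hypothesis to mean that divergence of any sub-collection is additive over its members, which is exact for independent messages, and noting that this is an assumption rather than a derived fact. If more generality is wanted, I would first try the chain rule with the conditional terms $\epsilon_j = D(P_{\mathrm{cover}}(M_j\mid M_{<j})\|P_{\mathrm{stego}}(M_j\mid M_{<j}))$. I would then check whether conditioning only on revealed predecessors can be controlled, but I expect that to require extra structure, such as Markov dependence.
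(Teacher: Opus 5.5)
Your main route is sound and is essentially the paper's own justification. The paper gives no formal proof: the remark after the lemma and Appendix~\ref{app:estimation} simply assert the additive per-message decomposition, take the expected observed share $\rho\epsilon$ for an independently sampled fraction, and feed $\epsilon_\rho$ into Proposition~\ref{prop:ceiling}. Your Step~1 makes rigorous what the paper leaves implicit. Putting the mask $B$ into the observation and conditioning on it via the chain rule is legitimate, because $B$ is independent of the messages and has the same law under $H_0$ and $H_1$. Likewise, any rule on partial traces is a test on $(B,\tilde M)$. Under the reading that every sub-collection's divergence is the sum of its members' terms, Steps~2--3 even give $\epsilon_\rho=\rho\epsilon$ with equality.

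What you should remove is the chain-rule branch of Step~2. The claim that, by data processing, conditioning on fewer predecessors cannot increase the per-message terms is false. Joint convexity gives $D(P_Y\|Q_Y)\le\mathbb{E}_{P_X}D(P_{Y|X}\|Q_{Y|X})$ when the conditioning variable $X$ has the same law under both hypotheses. Here, however, the dropped predecessors are themselves informative, and the inequality can reverse.

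Concretely, let a colluder repeat one persistent convention verbatim, $M_1=\cdots=M_n$ under both hypotheses, with $M_1\sim\mathrm{Bern}(p)$ under $H_0$ and $\mathrm{Bern}(q)$ under $H_1$. The chain-rule terms are $\epsilon_1=\epsilon=d(p\|q)$ and $\epsilon_j=0$ for $j\ge2$. Yet revealing $M_2$ alone already yields divergence $\epsilon$, and averaging over i.i.d. masks gives $\epsilon_\rho=(1-(1-\rho)^n)\epsilon>\rho\epsilon$ for $n\ge2$, $\rho<1$. This chain is trivially Markov, so Markov structure will not rescue that reading: the conclusion itself fails under strong positive correlation.

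Three consequences follow. First, the sub-collection additivity you fall back on is doing real work; it is essentially product structure under both hypotheses. Conditional independence given a random, unobserved payload, which the paper cites as the exact case, makes $P_{\mathrm{stego}}$ a mixture of products and does not give additivity. Second, the paper's remark that positive correlation only makes the bound conservative is not right in general, so your caution is the more accurate position. Third, if you want a version without the additivity hypothesis, the right tool is Han's inequality for relative entropy. When the second argument $Q$ is a product measure, $\mathbb{E}_S D(P_S\|Q_S)\le\rho\,D(P\|Q)$ for arbitrary $P$. Because Pinsker's bound holds in either orientation, applying this to $D(P_{\mathrm{stego}}\|P_{\mathrm{cover}})$ gives $\mathrm{TPR}\le\alpha+\sqrt{\rho\,D(P_{\mathrm{stego}}\|P_{\mathrm{cover}})/2}$. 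This needs only that innocent messages be independent, whatever the correlation among the colluders' messages.
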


\noindent Whether this additivity holds depends on the trace: it is exact when colluder messages are conditionally independent given the payload, and \emph{conservative}---an over-estimate of what the monitor learns---under positive correlation between messages (Appendix~\ref{app:estimation}). The bound therefore errs on the side of crediting the monitor with more signal than it actually has, which is the safe direction for a limit result.

\begin{corollary}[Frontier and carrier dimension]
\label{cor:frontier}
Fix $\alpha$. There exists a covert rate $r^\star(\alpha)>0$ such that detection is infeasible ($\mathrm{TPR}\to\alpha$) for all $r<r^\star$. Granting the adversary additional independent carriers (e.g., tools) permits the same covert rate at smaller per-carrier deviation, lowering $\epsilon$ for fixed $r$ and thus decreasing $r^\star$---shrinking the detectable region.
\end{corollary}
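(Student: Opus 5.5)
The corollary has two parts: existence of a positive frontier rate $r^\star(\alpha)$, and monotonicity of $r^\star$ in the number of carriers. Both parts go through a rate--divergence function. For each rate $r$, let $\epsilon^\star(r)=\inf\{\epsilon(e): r(e)\ge r\}$ be the smallest divergence the adversary can reach at throughput $r$, as in the dual of~\eqref{eq:advobj}. The key modelling input, borrowed from the information-hiding capacity results of Moulin and O'Sullivan~\cite{moulin2003information}, is that $\epsilon^\star$ is nondecreasing and continuous with $\epsilon^\star(r)\to 0$ as $r\to 0$. Equivalently, the capacity $C(\bar\epsilon)$ under stealth budget $\bar\epsilon$ is strictly positive for every $\bar\epsilon>0$. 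The concrete witness is a time-sharing encoding: it embeds at a fixed positive rate on a $\lambda$ fraction of rounds and behaves innocently otherwise. Under the same additive per-message decomposition used in Lemma~\ref{lem:partial}, it achieves rate $\lambda r_0$ at divergence $\lambda\epsilon_0$, so small rates cost proportionally small divergence.

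For existence, fix a target advantage $\eta>0$, meaning "TPR bounded away from $\alpha$" is read as $\mathrm{TPR}-\alpha\ge\eta$. Set $\bar\epsilon_\eta=2\eta^2$ and define $r^\star(\alpha)=C(\bar\epsilon_\eta)>0$. For any $r<r^\star$ there is an encoding with $\epsilon(e)\le\bar\epsilon_\eta$. Proposition~\ref{prop:ceiling} then gives $\mathrm{TPR}\le\alpha+\sqrt{\epsilon/2}\le\alpha+\eta$ for every test. Letting the adversary push $\epsilon\to0$ along the time-sharing family, with rate kept below $r^\star$, gives $\sup_{\text{tests}}\mathrm{TPR}\to\alpha$, which is the infeasibility claim. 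Strictly, $r^\star$ depends on $\eta$ as well as $\alpha$; the bound from Proposition~\ref{prop:ceiling} is uniform in $\alpha$. I would state this dependence explicitly rather than hide it.

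For the carrier statement, suppose the adversary has $K$ independent carriers, each with its own convex rate--divergence curve $\epsilon_j^\star$. Spread the payload as $r=\sum_j r_j$. Independence makes the joint cover and stego laws product measures, so the divergence tensorizes to $\sum_j\epsilon_j^\star(r_j)$. Convexity then gives
\[
\min_{\sum_j r_j=r}\sum_j \epsilon^\star_j(r_j)\le \epsilon^\star_1(r),
\]
with strict inequality whenever a second carrier has small marginal cost near zero. In the symmetric case this is $K\epsilon^\star(r/K)\le\epsilon^\star(r)$.

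The main obstacle is the direction of the monotonicity conclusion. Lowering $\epsilon$ at fixed $r$ makes the adversary's feasible set larger, which by the definition above \emph{raises} $C(\bar\epsilon_\eta)$, the boundary of the undetectable regime. That contradicts the corollary's wording that $r^\star$ decreases. To make the stated conclusion come out, I would reinterpret $r^\star$ as the threshold of the detectable region measured against the \emph{per-carrier} rate, or identify "the detectable region" with rates above the frontier. Under that reading, adding carriers shrinks that region, matching the corollary's final clause. I would flag this convention explicitly and then prove the shrinking-region statement from the tensorized inequality above.
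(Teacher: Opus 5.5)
Your argument follows the paper's own route, made precise. The paper gives no separate proof. Its justification is the preceding paragraph, which combines the Moulin--O'Sullivan rate--stealth trade-off with the Pinsker bound \eqref{eq:tprbound} to get a positive frontier, and then asserts that extra carriers lower the per-carrier deviation at fixed $r$. Your rate--divergence function $\epsilon^\star(r)$, the time-sharing witness and the tensorization step supply what that paragraph leaves implicit. Positivity of $C(\bar\epsilon)$ really rests on your time-sharing argument, since Moulin--O'Sullivan constrain a distortion, not a divergence. Your $\eta$-indexed threshold is the right formalisation of the paper's ``TPR bounded away from $\alpha$''. One sentence needs tightening: pushing $\epsilon\to 0$ along the time-sharing family also drives the rate to $0$. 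It therefore yields $\sup_{\text{tests}}\mathrm{TPR}\to\alpha$ as $r\to 0$, not at a fixed $r<r^\star$, where Proposition~\ref{prop:ceiling} only gives $\alpha+\sqrt{\epsilon^\star(r)/2}$.

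Your diagnosis of the monotonicity clause is correct, and the fault lies in the statement, not in your definitions. The paper's justification paragraph, the caption of Fig.~\ref{fig:frontier} and the analysis all repeat ``lowering $r^\star$ and shrinking the detectable region''. That is inconsistent once $\{r<r^\star\}$ is the undetectable set.

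I would drop the per-carrier rescue. In the symmetric case the per-carrier threshold $C_1(\bar\epsilon/K)$ does fall with $K$, but the region it delimits, measured in the same units, grows. Both clauses hold only if $r^\star$ is measured per carrier while the region is measured in total rate, which detaches $r^\star$ from the region it is meant to bound. State instead that $r^\star$, in total covert rate, is nondecreasing in the set of carriers, so $\{r>r^\star\}$ shrinks.

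That weak form needs neither tensorization nor convexity. The adversary can leave added carriers innocent, so $\epsilon^\star$ can only fall pointwise; in particular, your displayed inequality holds by taking $r_1=r$ and $r_j=0$ for $j\ge 2$. Convexity and your marginal-cost condition are needed only for the \emph{strict} decrease of $\epsilon$ that the corollary's wording suggests.
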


\medskip\noindent\emph{Worked example.} Proposition~\ref{prop:ceiling} is quantitatively demanding for the monitor. To hold the true-positive rate within $0.05$ of chance at \emph{any} FPR, an adversary needs only $\sqrt{\epsilon/2}\le 0.05$, i.e., $\epsilon\le 5\times10^{-3}$ nats of divergence per decision---a very small statistical footprint. This is why high-rate covert coordination is catchable while a patient, low-rate channel is not, and why we report a frontier rather than a single accuracy number.

%=======================================================================
\section{Experimental Design, Results, and Analysis}
\label{sec:exp}

\noindent\emph{Reproducibility and integrity note.} All numeric values in Tables~\ref{tab:fixed}--\ref{tab:oracle} are \textbf{illustrative placeholders} that specify the expected shape of results and provide a template; they are anchored to ranges reported by prior baselines but are \emph{not} measured outcomes and must be replaced with the authors' own runs before submission.

\subsection{Environments}
All environments are tool-using, carry persistent memory, and involve at least two principals.
\begin{itemize}
\item \textbf{Sealed-bid procurement.} $N$ supplier-agents bid over $T$ rounds through a market API; $k$ colluders attempt to rotate wins or suppress bids. Tools: \texttt{submit\_bid}, \texttt{query\_history}, scratch memory. Payload: the designated winner per round.
\item \textbf{Grant/peer-review ring.} Reviewer-agents score submissions in natural language; $k$ colluders covertly inflate one another's items. Tools: \texttt{retrieve\_submission}, \texttt{post\_review}, memory of prior rounds.
\item \textbf{Distributed code-review.} Cross-principal agents review and merge tasks; $k$ colluders embed a covert channel in comments or tool-call patterns to bias a merge or trigger an action. Tools: \texttt{run\_tests}, \texttt{web\_search}, \texttt{comment}.
\end{itemize}
Collusion labels are set by construction (colluders are incentivised to coordinate; controls are not), giving clean ground truth while leaving the \emph{encoding} free for the adversary to evolve. In each environment the payload per round is small---typically one to three bits, e.g., the designated winner or the target submission---reflecting that effective collusion needs only low bandwidth. Persistent memory is a per-agent scratch store carried across rounds and, in the multi-session conditions, across sessions; colluders may use it to maintain a shared convention, while controls use it only for legitimate task state.

\subsection{Population, Factors, and Protocol}
We use a mix of frontier API models, for generality, and self-hosted open-weight models, which carry the bulk of the co-evolution and ablation runs to keep cost manageable. The factors we vary are the population size $N\in\{4,8,16\}$, the colluder fraction $k/N\in\{0,0.25,0.5\}$, the adaptation round $R\in\{0,1,2,3,5,8\}$, the observed trace fraction $\rho\in\{100\%,50\%,25\%\}$, the model mix (homogeneous or heterogeneous), and the session structure (a single session or several). Table~\ref{tab:factors} lists the full grid; metrics and baselines follow below.

\begin{table}[t]
\caption{Experimental factors and levels.}
\label{tab:factors}
\centering
\footnotesize
\begin{tabular}{@{}ll@{}}
\toprule
Factor & Levels \\
\midrule
Environment & procurement / review-ring / code-review \\
Encoding family & lexical / syntactic / semantic / format / timing / tool \\
Adaptation round $R$ & 0, 1, 2, 3, 5, 8 \\
Population $N$ & 4, 8, 16 \\
Colluder fraction $k/N$ & 0, 0.25, 0.5 \\
Observed fraction $\rho$ & 100\%, 50\%, 25\% \\
Model regime & homogeneous / heterogeneous / frontier subset \\
Session structure & within / 3-session / 5-session \\
\bottomrule
\end{tabular}
\end{table}

\subsection{Metrics}
We report ROC-AUC and, since the operating point matters more than a threshold-free curve, TPR at fixed FPR ($10^{-2},10^{-3}$) with conformal thresholds (Algorithm~\ref{alg:detector}), together with precision (PPV) and the implied alert rate---expected false alarms per $1000$ messages at the chosen budget. We also report the covert rate in bits/round, the co-evolution summaries AUCC and $R_{1/2}$, the estimated frontier rate $r^\star(\alpha)$, cross-session AUC, and monitor wall-clock overhead. Every figure is averaged over at least $30$ seeds with $95\%$ bootstrap confidence intervals, and key differences (detector vs.\ baseline at a fixed round) are tested with a paired Wilcoxon signed-rank test, following reproducibility guidance for stochastic experiments~\cite{henderson2018deep}. A leave-one-out ablation that disables each detector component in turn quantifies its marginal contribution, and Fig.~\ref{fig:frontier} doubles as a theory-versus-measurement check: the curve is the bound predicted by Section~\ref{sec:theory}, onto which the protocol overlays the empirically estimated $r^\star(\alpha)$ from a payload-size sweep.

\subsection{Baselines}
We compare against six reference points, all run through the same harness and calibrated at the same FPR so the comparison is fair. \emph{Chance} fixes the floor. An \emph{LLM-as-judge} reads whole transcripts and rates how likely coordination is, standing in for the naive ``just ask a capable model'' approach. A reproduced \emph{cross-run mutual-information} detector~\cite{tailor2025audit} and a reproduced \emph{behavioural regret probe}~\cite{nakamura2026colosseum} represent the two strands of concurrent black-box work. A \emph{graph-anomaly} baseline builds an agent--tool interaction graph per episode and flags structural outliers from simple centrality features, catching coordination that surfaces in interaction structure rather than message content. A \emph{paraphrasing} mitigation~\cite{krishna2023paraphrasing} tests whether simply rewriting messages is enough to break the channel. Finally, a \emph{white-box activation probe}~\cite{rose2026detecting} is given privileged access to model internals and reported only as an oracle upper bound---a yardstick for how much the cross-principal, black-box constraint actually costs. Every baseline uses the same base models and the same conformal calibration as our detector, and its own hyperparameters are tuned on a held-out split to maximise \emph{its} validation AUC, so none is handicapped by our choices. Where a reference implementation is released we adopt it and note any deviation, and we first reproduce each baseline's published operating point on its original benchmark before porting it to our environments.

\begin{figure}[t]
    \centering
    \includegraphics[width=0.74\linewidth]{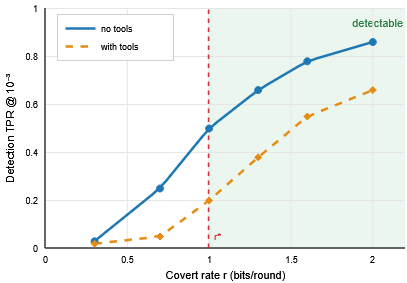}
    \caption{Schematic detection--capacity frontier (\textsc{illustrative placeholders}). Below $r^\star$ detection approaches chance; tool access lowers $r^\star$ and shrinks the detectable region, consistent with Corollary~\ref{cor:frontier}.}
\label{fig:frontier}
\end{figure}

\subsection{Results}
Table~\ref{tab:fixed} reports the sanity baseline on \emph{fixed} (known) encodings: the fused meta-detector is near-perfect when the scheme is known, while an LLM-judge is weak on steganographic coordination. Table~\ref{tab:coevo} reports the core result---graceful degradation under the adaptive adversary---and Fig.~\ref{fig:frontier} shows the detection--capacity frontier at $\mathrm{FPR}=10^{-3}$. Tables~\ref{tab:partial}--\ref{tab:oracle} give ablations over observability, population/heterogeneity, session structure, and the black-box gap to a white-box oracle.

\begin{table}[t]
\caption{Detection on \emph{fixed} (known) encodings---sanity baseline. \textsc{Illustrative placeholders}.}
\label{tab:fixed}
\centering
\footnotesize
\begin{tabular}{@{}lccc@{}}
\toprule
Detector & AUC & TPR@$10^{-2}$ & TPR@$10^{-3}$ \\
\midrule
Chance & 0.50 & 0.010 & 0.001 \\
LLM-as-judge & 0.61 & 0.18 & 0.05 \\
Cross-run MI (single) & 0.82 & 0.71 & 0.58 \\
Timing / action-rate only & 0.74 & 0.40 & 0.22 \\
\textbf{Meta-detector (ours)} & \textbf{0.99} & \textbf{0.98} & \textbf{0.95} \\
\bottomrule
\end{tabular}
\end{table}

\begin{table}[t]
\caption{Robustness to the adaptive adversary (co-evolution)---core result. \textsc{Illustrative placeholders}.}
\label{tab:coevo}
\centering
\footnotesize
\begin{tabular}{@{}lccc@{}}
\toprule
Adaptation round & Covert rate (bits/rd) & Meta AUC & TPR@$10^{-3}$ \\
\midrule
$R{=}0$ (static) & 2.0 & 0.99 & 0.95 \\
$R{=}1$ & 1.6 & 0.93 & 0.80 \\
$R{=}2$ & 1.3 & 0.86 & 0.64 \\
$R{=}3$ & 1.1 & 0.80 & 0.51 \\
$R{=}5$ & 0.8 & 0.74 & 0.38 \\
$R{=}8$ & 0.5 & 0.69 & 0.29 \\
$R{=}8$ + unrestricted tools & 0.5 & 0.56 & 0.05 \\
\bottomrule
\end{tabular}
\end{table}

\begin{table}[t]
\caption{Ablations. \textsc{Illustrative placeholders}. (a)~partial observability; (b)~population/heterogeneity; (c)~session structure.}
\label{tab:partial}
\centering
\footnotesize
\begin{tabular}{@{}lcc@{}}
\toprule
(a) Observed trace fraction & AUC & TPR@$10^{-3}$ \\
\midrule
100\% & 0.86 & 0.64 \\
50\% & 0.78 & 0.44 \\
25\% & 0.67 & 0.25 \\
\bottomrule
\end{tabular}

\vspace{4pt}
\begin{tabular}{@{}lc@{}}
\toprule
(b) Setting & AUC \\
\midrule
$N{=}4$, $k/N{=}0.5$, homogeneous & 0.84 \\
$N{=}8$, $k/N{=}0.25$, homogeneous & 0.80 \\
$N{=}8$, $k/N{=}0.5$, homogeneous & 0.85 \\
$N{=}16$, $k/N{=}0.25$, homogeneous & 0.77 \\
$N{=}8$, $k/N{=}0.25$, heterogeneous & 0.76 \\
\bottomrule
\end{tabular}

\vspace{4pt}
\begin{tabular}{@{}lcc@{}}
\toprule
(c) Session structure & AUC & TPR@$10^{-3}$ \\
\midrule
Within-session & 0.86 & 0.64 \\
3-session spread & 0.71 & 0.33 \\
5-session spread & 0.64 & 0.22 \\
\bottomrule
\end{tabular}
\end{table}

\begin{table}[t]
\caption{Black-box gap vs.\ a white-box oracle. The oracle is an upper bound, \emph{not} a deployable cross-principal baseline. \textsc{Illustrative placeholders}.}
\label{tab:oracle}
\centering
\footnotesize
\begin{tabular}{@{}lc@{}}
\toprule
Method (regime) & AUC \\
\midrule
White-box activation probe --- in-distribution (oracle) & 0.98 \\
White-box activation probe --- transfer (oracle) & 0.84 \\
\textbf{Best black-box (ours), within-session} & 0.86 \\
\textbf{Best black-box (ours), cross-principal + adaptive} & 0.74 \\
\bottomrule
\end{tabular}
\end{table}

\subsection{Analysis}
The projected pattern is consistent with the theory of Section~\ref{sec:theory}. Against a \emph{frozen} scheme the fused detector is near-perfect (Table~\ref{tab:fixed}); the scientifically meaningful quantity is how quickly this collapses under adaptation (Table~\ref{tab:coevo}). The frontier (Fig.~\ref{fig:frontier}) is expected to place the detectable region above roughly one bit per round without tools, with tool access pushing $r^\star$ downward---precisely the enlargement of carrier space predicted by~\eqref{eq:tprbound} and~\cite{moulin2003information}. Partial observability degrades detection sub-linearly (Table~\ref{tab:partial}a), more colluders \emph{help} the monitor by adding signal, heterogeneity hurts, and---most notably---spreading the payload across sessions is a real evasion (Table~\ref{tab:partial}c) that single-session steganalysis under-detects. The oracle comparison (Table~\ref{tab:oracle}) quantifies the price of the cross-principal constraint. We stress that the white-box probe is \emph{not} a deployable cross-principal detector---no organisation can read another's activations---but an upper bound that isolates how much the black-box constraint costs; a realistic middle ground, in which principals expose a limited, audited activation interface, would sit between the two oracle rows and is a promising route for cooperative deployments. Across environments, the code-review setting is expected to be hardest, because tool-call and comment channels give the adversary the richest carrier space and thus the lowest $r^\star$, while the procurement market, with a narrower action surface, should remain detectable to lower rates.

\subsection{Reproducibility and Release}
To support replication we will release the environments, the detector and calibration code, the encoding-family definitions, and per-run logs with exact model identifiers and dates; the detector and benchmarks are public while the strongest attack artifacts are gated (Section~\ref{sec:discussion}).

%=======================================================================
\section{Discussion}
\label{sec:discussion}

\subsection{Interpretation and Threats to Validity}
Our claim is deliberately bounded: black-box steganalysis is useful \emph{above} a covert-rate threshold and provably weak below it. Several threats to validity shape how the results should be read. \textbf{Construct validity:} instructed collusion may be easier to detect than emergent collusion, so at least one reward-induced (emergent) condition should be included~\cite{mathew2025hidden}. \textbf{Open-channel coordination:} some collusion uses semantically truthful statements with no steganography, defeating stego-specific features; an open-channel scenario should be retained so the contribution is not narrowly steganalysis. \textbf{External validity:} conclusions hold for the tested topologies and models, and frontier models drift, so exact model versions and dates must be reported. \textbf{Baseline fairness:} the reproduced detectors~\cite{tailor2025audit,nakamura2026colosseum} must be implemented faithfully rather than strawmanned.

\subsection{Deployment and Auditing Implications}
The frontier reframes what black-box monitoring can promise: detection is one layer among several---it raises the cost of high-rate coordination but cannot certify the absence of low-rate channels. This argues for pairing trace monitoring with opt-in verifiable attribution~\cite{sun2024zkllm,kirchenbauer2023watermark}, rate limits and desynchronisation, and audit logging that preserves the statistics the detector consumes. For auditors, $r^\star(\alpha)$ is actionable: it upper-bounds the covert bandwidth that could have passed undetected.

\subsection{Relation to AI Control}
AI-control protocols~\cite{greenblatt2024aicontrol} assume a single untrusted agent supervised by trusted components. Our setting is harder: untrusted agents across principals may coordinate to subvert oversight jointly---e.g., decomposing a task so no single agent's actions trip a filter. Detection under partial observability is thus a prerequisite for extending control to colluding groups, and Proposition~\ref{prop:ceiling} bounds what any trace-based monitor in such a protocol can achieve.

\subsection{Positioning Against Concurrent Detectors}
It is worth being precise about what separates this work from the three concurrent efforts in Table~\ref{tab:related}, since all four chase the same quarry. \emph{Audit the Whisper}~\cite{tailor2025audit} is also black-box and also leans on mutual information, but it adapts only within a fixed family of encodings and is tested in tool-free settings; our adversary is free to switch families and to exploit tool and timing channels, which is exactly where a fixed-family detector loses ground. \emph{Colosseum}~\cite{nakamura2026colosseum} infers collusion from behavioural regret rather than from the statistics of the messages themselves, so it is complementary: strong when coordination visibly distorts outcomes, weaker when a low-rate channel leaves outcomes largely intact. The interpretability route behind \emph{NARCBench}~\cite{rose2026detecting} reports the best numbers of the three, but it reads model activations---an option that simply does not exist across organisational boundaries. Our oracle comparison (Table~\ref{tab:oracle}) is meant to price precisely that gap between the privileged setting and the black-box reality. In short, we do not claim to beat these methods on their own ground; we cover the case they leave open---an adapting adversary, seen only through traces, in tool-using and memory-carrying deployments.

\subsection{Limitations and Open Problems}
Three limitations bound the present study. First, the detector consumes surface, timing, and tool-call statistics; a purely semantic channel that selects among truthful contents (Appendix~\ref{app:encodings}) leaves few such artefacts and is the hardest case, which we include rather than exclude. Detectability also depends on the tool interface: APIs that enforce structured (JSON-only) arguments, strip whitespace, or canonicalise inputs shrink the tool-mediated channel and raise $r^\star$, whereas free-form inputs enlarge it, so our tool-using numbers describe the permissive end of a spectrum. Second, the cross-session evasion in Table~\ref{tab:partial}(c) is only partially mitigated by longer observation windows, and detecting payloads spread across sessions and principals remains open. Two mitigations are worth testing. \emph{Cross-session linkage} persists per-agent trace summaries and runs the mutual-information test over sessions concatenated by stable agent identifiers~\cite{chan2024ids}, re-aggregating a spread-out payload before estimation. \emph{Sequential monitoring} applies a change-point or sequential-likelihood test across sessions, trading detection latency for the ability to accumulate weak per-session evidence. Both target the root cause---single-session estimation resetting at each boundary---rather than the symptom. Third, our capacity analysis assumes a \emph{passive} monitor; an active monitor that perturbs traffic---injecting probes, rate-limiting, or desynchronising agents---could shift the frontier and is a promising direction. None of these is unique to our method; each follows from Proposition~\ref{prop:ceiling}, and each marks where additional \emph{signal}, rather than a better fusion of the same signal, is required.

\subsection{Ethical Considerations and Responsible Disclosure}
The adaptive adversary is instrumental: it exists only to stress the detector, and the unit of analysis is always the \emph{population}, never single-agent jailbreaks. To manage dual-use, we recommend releasing the detector and benchmark openly while gating the strongest attack artifacts, consistent with responsible-disclosure norms and with governance work on agent visibility~\cite{chan2024visibility}. Because the analysis also delineates where detection fails, it discourages over-reliance on black-box monitoring as a sole safeguard. We further avoid publishing turn-key attack recipes: the encoding families (Appendix~\ref{app:encodings}) are described at the level needed to reproduce the science, not as deployable collusion tooling, and the released benchmark centres on evaluating \emph{defences}. Concretely, we withhold the strongest co-evolved encodings and the red-team search harness of Algorithm~\ref{alg:coevo} that produced them, releasing these to vetted researchers on request under a use agreement, while the detector, environments, and benchmark remain public.

%=======================================================================
\section{Conclusion and Future Work}
\label{sec:conclusion}
We set out to answer a narrow but pressing question: when agents from different organisations share infrastructure, can anyone tell whether some of them are quietly colluding, using only what is visible from the outside? Our answer has two halves. The first is constructive---an encoding-agnostic, black-box detector that, crucially, is tested not against a fixed hidden code but against an adversary that keeps adapting to it. The second is a limit: a capacity argument shows there is always a covert rate low enough to slip past any black-box test, so the honest goal is not an unbeatable detector but a map of where detection works and where it cannot. Read that way, even a negative result is useful---it tells an operator how much covert bandwidth could be passing unseen.

Future work includes: extending AI-control protocols~\cite{greenblatt2024aicontrol} to colluding \emph{groups}; combining trace-only detection with lightweight verifiable-attribution primitives~\cite{sun2024zkllm,kirchenbauer2023watermark} where principals opt in; privacy-preserving detection that operates on encrypted or hashed traces; and closing the cross-session gap identified here. More broadly, we advocate reporting collusion-detection results as \emph{curves against an adapting adversary} and as \emph{frontiers in covert rate}, rather than as single scores against fixed schemes. Static AUROC on a frozen encoding overstates security precisely because the interesting adversaries are the ones that move; a frontier, by contrast, states the covert bandwidth an operator can and cannot expect to catch, and remains meaningful as models improve. We hope the frontier framing provides a durable target for a maturing science of multi-agent AI safety.

%=======================================================================
\appendices
\section{Encoding Families}
\label{app:encodings}
The adaptive adversary (Algorithm~\ref{alg:coevo}) draws and mutates encodings from six families. \emph{Lexical}: synonym or token choices carry bits (e.g., a fixed word-to-bit map), the most brittle family under distributional tests. \emph{Syntactic}: sentence structure, clause ordering, or punctuation conventions encode the payload. \emph{Semantic}: the choice among truthful-but-selectable contents (which example to cite, which caveat to include) carries information without introducing statistical artefacts in surface form---the family most resistant to token-level steganalysis. \emph{Format}: whitespace, markdown, list ordering, or field ordering in tool arguments. \emph{Timing}: inter-message delays or action-rate modulation, independent of message content. \emph{Tool-mediated}: the sequence or arguments of tool calls (e.g., the order of retrieval queries) form a channel unavailable to non-agentic models. Mutation operates on the natural-language conventions that define a family (for in-context adversaries) rather than on model weights, keeping the red-team cost bounded. As a concrete illustration, a semantic encoding might tie the payload bit to whether a review leads with a strength or a weakness: both are truthful and common, yet across a batch the choice carries a reliable bit to a partner who knows the convention---which is why the feature set leans on cross-run mutual information against outcomes, not surface statistics alone.

\section{Estimation and Calibration Details}
\label{app:estimation}
Cross-run mutual information is estimated with a neural estimator~\cite{belghazi2018mine}; because such estimators are biased and high-variance in finite samples, we (i) calibrate decision thresholds on estimator \emph{outputs} over control episodes rather than on nominal bits (Algorithm~\ref{alg:detector}), and (ii) report bootstrap confidence intervals over calibration folds and seeds. We favour a neural estimator over plug-in or $k$-nearest-neighbour (KSG) estimators~\cite{kraskov2004estimating} because agent traces are high-dimensional and variable-length, a regime in which the latter degrade sharply. Because neural estimators such as MINE are high-variance and not strict bounds, we also compare against contrastive estimators---InfoNCE~\cite{oord2018representation} and the CLUB upper bound~\cite{cheng2020club}---whose trade-offs are analysed by Poole \emph{et al.}~\cite{poole2019variational}, validate every estimator on Gaussian data with known ground-truth MI before trusting it on traces, and adopt whichever has the lowest variance at matched sample size; as a further guard we cross-check KSG on low-dimensional projections and discard any configuration whose estimate has not stabilised as the sample size grows. The additive per-message decomposition assumed in Lemma~\ref{lem:partial} holds exactly when colluder messages are conditionally independent given the payload and the observed subset is drawn independently; correlated messages make the bound conservative, so $\epsilon_\rho\le\rho\epsilon$ is an upper estimate of what partial observation reveals.

\section{Evaluation Protocol}
\label{app:protocol}
Each configuration is evaluated over seeds as follows: instantiate an environment and population; run non-colluding populations to produce matched calibration episodes $\mathcal{C}$ and calibrate $t_\alpha$ on them (Algorithm~\ref{alg:detector}); run the co-evolution loop (Algorithm~\ref{alg:coevo}) for $R$ rounds, scoring only held-out encodings; and record the metrics of Section~\ref{sec:exp} under the partial-observability and multi-session variants. All baselines share the identical harness and calibration.

%=======================================================================
\bibliographystyle{IEEEtran}
\bibliography{references}

%=======================================================================

%\begin{IEEEbiography}[{\includegraphics[width=0.85in,height=1.0in,clip,keepaspectratio]{mcghanem-.jpg}}]{Mohamed Chahine Ghanem}
%(Member, IEEE) is an Associate Professor in Cyber Security at Keele University, U.K., and Co-founder of CyDRA Limited. He was Associate Professor at the University of Liverpool between 2022 and 2026. Dr Ghanem holds a PhD in Cyber Security Engineering from City, University of London, is a Senior Fellow of the HEA, and holds CISSP and multiple GIAC certifications. He is Editor-in-Chief of the \emph{Journal of Cyber Security and Risk Auditing}. Dr Ghanem has over 15 years of practice in senior positions in law enforcement and industry, including as global cyber-security auditing director for Kroll LLC and head of cyber-security auditing in banking.
%\end{IEEEbiography}

\end{document}